\definecolor{szin}{rgb}{0,0.44,0.4}
\definecolor{szin2}{rgb}{0.902,0.2705,0}
\definecolor{szin3}{rgb}{0,0.5,0}
\newtheorem{proposition}{{\bf \sc Proposition}}
\newtheorem{lemma}[proposition]{{\bf \sc Lemma}}
\newtheorem{definition}{{\bf \sc Definition}}
\theoremstyle{remark}
\newcommand{\bge}{\begin{equation}}
\newcommand{\ene}{\end{equation}}
\begin{document}

\title{\huge Replicator equation on networks with degree regular communities}

\author{
{\sc Daniele Cassese}\footnote{ The author thanks Hisashi Ohtsuki for useful comments and acknowledges support from FNRS (Belgium).} \\[2pt]
Emmanuel College, University of Cambridge, CB2 3AP Cambridge, UK\\
$^*${\texttt{Corresponding author: dc554@cam.ac.uk}}\\[2pt]}
\date{}

\maketitle

\begin{abstract}
{The replicator equation is one of the fundamental tools to study evolutionary dynamics in well-mixed populations. This paper contributes to the literature on evolutionary graph theory, providing a version of the replicator equation for a family of connected networks with communities, where nodes in the same community have the same degree. This replicator equation is applied to the study of different classes of games, exploring the impact of the graph structure on the equilibria of the evolutionary dynamics.}\\[2mm]

\noindent \textbf{Keywords:} Replicator equation, evolutionary graph theory, prisoner's dilemma, hawk-dove, coordination
\\[1mm]
\end{abstract}

\section{Introduction}
Evolutionary game theory  stems from the field of evolutionary biology, as an application of game theory to biological contests, and successively finds applications in many other fields, such as sociology, economics and anthropology. The range of phenomena studied using evolutionary games is quite broad: cultural evolution \cite{sforza}, the change of behaviours and institutions over time \cite{bowlescoev}, the evolution of preferences \cite{bowles98} or language \cite{nowaklang}, the persistence of inferior cultural conventions \cite{bowlesbelloc}. A particularly vaste literature investigates the evolutionary foundations of cooperation \cite{bowlescoop, bowlesaltr, Doebli04} just to name a few. For an inspiring exposition of evolutionary game theory applications to economics and social sciences see \cite{bowles}.

One of the building blocks of evolutionary game theory is that fitness (a measure of reproductive success relative to some baseline level) of a phenotype does not  just depend on the quality of the phenotype itself, but on the interactions with other phenotypes in the population: fitness is hence frequency dependent \cite{nowak}, and as strategies are the manifestation of individuals' genetic inheritance, individuals are characterised by a fixed strategy throughout their lifetime. The payoffs of the game are in terms of fitness, so if a trait offers an evolutive advantage over another, this means a  better fitness for the individual who has inherited that trait. The dynamics resulting from interactions between individuals carrying different traits capture the process of natural selection: the strategy (phenotype, cultural trait) that performs better gives an advantage in term of reproductive success, hence it will reproduce at a higher rate and eventually take over the entire population \cite{nowak}.

\noindent
Early models of evolutionary dynamics assume well-mixed population, ignoring the relational structure that constrains interactions between agents. The study of evolutionary dynamics on structured population is the subject of interest of evolutionary graph theory, introduced by \cite{liebermanhauertnowak}. In this framework agents are placed on a network and play the game with their next neighbours, and the least successful (in terms of fitness) are replaced by their most successful neighbours' offsprings. Evolutionary dynamics on graphs has been applied extensively to the study of cooperation \cite{santos, repgraph, ohtsukinowak, Allen17} showing that there are radical differences with the case of a well-mixed population, and that the success of cooperation depends crucially on the underlying network structure. 
Analytical results have been derived for evolutionary games on regular networks \cite{simplerule,repgraph,taylor07} while more realistic complex networks have been investigated  through computer simulations \cite{maciej14}.
This work is an extension of \cite{myself}, where I studied cooperation on a family of graphs characterised by degree-regular communities, proving that  the relation between the structure of the population and the cost of cooperation determines the nature of equilibria for a Prisoner's dilemma game.
In this paper I briefly present the replicator equation for graphs on regular communities, and an algorithm to generate graphs in this family, as well as its application to the Prisoner's Dilemma as already in \cite{myself}. In addition to the previous version of this work here I study other classes of games under the replicator dynamics, namely Hawk-Dove and Cooperation games, exploring how the network impacts the equilibria compared to the mean-field case.

\section{Replicator equation on regular graphs}
The Replicator Equation in its mean-field version studies frequency dependent selection without mutation in the deterministic limit of an infinitely large well-mixed population \cite{nowak}. 
Take an evolutionary game with $n$ strategies and a payoff matrix $\Pi$, where $\pi_{ij}$ denotes the payoff of strategy $i$ against strategy $j$. Call $x_i$ the frequency of strategy $i$, where $\sum_{i \in n} x_i=1$, the fitness of strategy $i$ is $f_i= \sum_{j \in n}x_j \pi_{ij}$, and $\phi= \sum_{i \in n} x_if_i$ the average fitness of the population, then the replicator equation is:

\begin{equation}
\label{repeq}
\dot{x}_i= x_i(f_i -\phi) \text{ for } i \in n
\end{equation}

If the population structure is a regular network of degree $k$, under weak selection the replicator equation obtained with pair approximation (for details on the method see  \cite{matsuda}) is \cite{repgraph}:

\begin{equation}
\dot{x}_i= x_i \Biggl[ \sum_{j=1}^n x_j (\pi_{ij}+b_{ij}(k, \mathbf{\Pi}))- \phi \Biggr]
\label{repeq1}
\end{equation}

where $b_{ij}$  depends on the degree of the network, $k$, the payoff matrix $\mathbf{\Pi}$ and the updating rule. \cite{repgraph} derive $b_{ij}$ under three updating rules:

\medskip
\begin{description}
\item{{Birth-Death:}} An individual is chosen for reproduction with probability proportional to fitness. The offspring replaces one of the \emph{k} neighbour chosen at random.
\item{{Death-Birth:}} An individual is randomly chosen to die. One of the \emph{k} neighbours replaces it with probability proportional to their fitness.
\item{{Imitation:}} An individual is randomly chosen to update her strategy. She imitates one of her \emph{k} neighbours proportional to their fitness.
\end{description}

The corresponding $b_{ij}$s are:

\begin{equation}
\label{updating}
\begin{split}
&\text{Birth-Death:} \quad b_{ij}= \frac{\pi_{ii}+\pi_{ij}-\pi_{ji} -\pi_{jj}}{k-2} \\
&\text{Death-Birth:} \quad b_{ij}= \frac{(k+1)\pi_{ii}+\pi_{ij}-\pi_{ji}-(k+1)\pi_{jj}}{(k+1)(k-2)} \\
&\text{Imitation:} \quad b_{ij}= \frac{(k+3)\pi_{ii}+3\pi_{ij}-3\pi_{ji}-(k+3)\pi_{jj}}{(k+3)(k-2)} 
\end{split}
\end{equation}

Hence  $b_{ij}$ captures local competition on a graph taking account of the gain of $i$th strategy from $i$ and $j$ players and the gains of $j$th strategy from $i$ and $j$ players \cite{structured}. The derived equation is a very good approximation for infinitely large regular graphs with negligible clustering (absence of clustering is the basic assumption behind the moment closure in pair approximation) and provides an easy-to-deal-with differential equation that can be computed at least numerically.

\section{Replicator equation on networks with degree regular communities}

In this section I present the extension of the replicator equation to a more complex family of graphs, where nodes can have different degrees. First I define a family of connected graphs (which I call \emph{multi-regular} graphs) where nodes are clustered in degree-homogeneous communities, such that  most of the connections are between same-degree nodes, and few edges connect communities with different degrees. Hence an algorithm to create such networks is proposed, and finally the replicator equation for these networks is introduced.

The definition of the class of multi-regular graphs is motivated by the necessity to have more realistic network structures and at the same time preserving analytical tractability. The homogeneous structure of regular graphs, where all nodes have the same number of neighbours, makes them poorly representative of real world heterogeneous networks \cite{Strogatz01}. Real world networks are typically characterised by small-world properties \cite{wattsstrogatz98} and scale-free distributions  \cite{barabasi99}, and regular networks fail to satisfy both characteristics: they may have a high clustering coefficient, but usually have large number of hops between pairs of nodes (so they are not small-world), and they trivially are not scale-free, as every node has the same degree. These differences are not without consequences for the dynamics, hence predictions made on regular network models result incorrect if applied to real networks. A standard example can be found in epidemic models: while on regular networks an infection persists if the transmission rate is beyond a finite epidemic threshold, on scale-free networks there is no epidemic threshold, hence infections can spread and persist independently of their transmission rate \cite{pastorsatorras01}.
Degree heterogeneity also impacts evolutionary dynamics, and higher heterogeneity has been shown to favour cooperation over defection \cite{santos}. The family of multi-regular graphs is a better representation of real world networks than regular graphs because it allows degree heterogeneity, and at the same time, their local homogeneity allows to derive an analytic expression for the replicator dynamics.
Moreover the numerical simulations suggest (but we have no proof) that even if the real population is not structured in degree-regular communities, the replicator dynamics on a multi-regular graph with the same degree distribution of the real population is not far from the dynamics on the real population most of the times.

\section{Multi-regular graphs}

\begin{definition}
A \emph{multi-regular} graph $G$ is a connected graph partitioned into $m$ degree-homogeneous communities $C^i_k$, $i = \{1, \dots, m \}$, where each node in community $C^i_k$ has degree $k$, and $k \ge 3$. In each community $C^i_k$ the number of nodes $n_i$ is at least $k+1$, and  $n_i k$ must be even. Moreover, the number of connections between different communities must be even.
\end{definition}

\begin{definition}
For each community $C^i_k$, call \emph{interior} those nodes which neighbourhood is entirely contained in the community, and  \emph{frontier} those which have at least one neighbour in a different community.
\end{definition}

\medskip
Notice that we require $n_i \ge k+1$ to ensure the existence of a regular graph of degree $k$ on $n_i$ nodes, and that we require an even number of edges between nodes in $C^i_k$ and nodes outside said community to guarantee that each node in $C^i_k$ has degree $k$. To provide intuition, consider we want a multi-regular graph with two communities of degree $k_1$ and $k_2$ respectively, and we start with two disconnected regular components of degree $k_1$ and $k_2$. If we connect the two components by adding an edge between them, then the  two frontier nodes will have degree $k_1+1$ and $k_2+1$ respectively, violating the condition for being in a degree-homogeneous community. If for each of the two frontier vertices we erase one edge other than the one connecting them, then there will be two other nodes (one for each community) violating that condition, as those will now have degree $k_1-1$ and $k_2-1$ respectively. If we connect these two nodes then regularity condition is restored.
Notice also that the definition of multi-regular graph implies that the minimal community size is 4, but we are never going to consider such small communities in this work, as the replicator equation provided is a good approximation for large graphs (with at least $10^5$ nodes).

\begin{figure}
\centering
\includegraphics[width =0.5\textwidth]{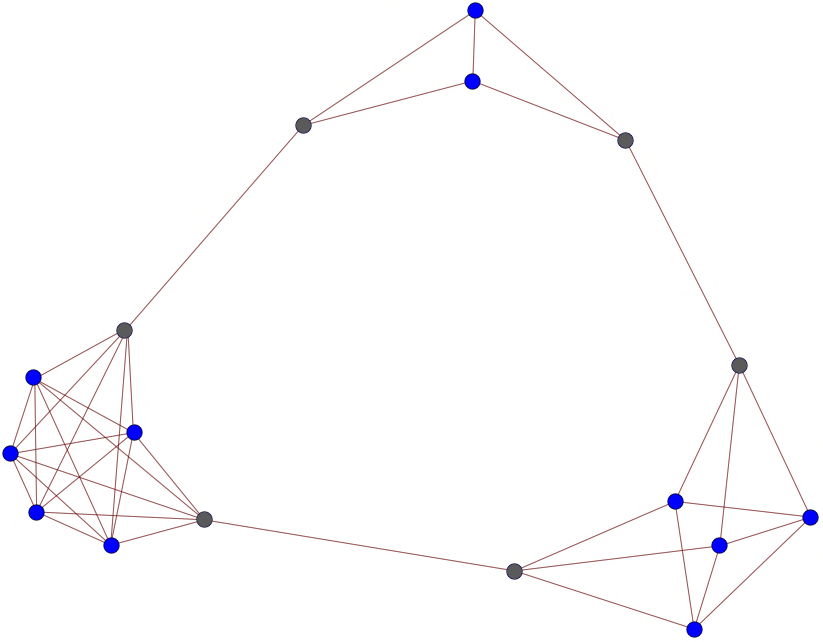}
\caption{\emph{Multi-regular} graph with three communities of degrees 3,4 and 6. The gray vertices are the frontier vertices  which create a bridge with an adjacent community of different degree. The blue vertices are interior vertices.}
\label{graph} 
\end{figure}

\subsection{Generating a random multi-regular graph}

Here I propose an algorithm to generate a multi-regular graph on $n$ nodes knowing the degree distribution $\mathbb{P}(k)$, based on the Pairing model. Assume that the number of nodes with degree $k$, $n_k$ is given by the nearest even integer $[n \mathbb{P}(k)]$, and that each community has a fraction $r$ of its connections between interior nodes. The algorithm goes as follows:

\begin{enumerate}
\item generate $\sum_{k} n_k k $ points.
\item divide the points in $n_k$ buckets in this way:
\begin{enumerate}
\item take $n_k$ points and put each in a different bucket.
\item add $k-1$ points to each of these buckets.
\item repeat the procedure for all different $k$, such that for all degrees $k$ there will be $n_k \mathbb{P}(k)$ buckets with $k$ points each.
\end{enumerate}
\item take a random point, say it is in a bucket with $k$ points
\item join it with probability $r$ to a random point in one of the $n_k \mathbb{P}(k)$ buckets with $k$ points, and with probability $1-r$ to any of the other points at random.
\item continue until a perfect matching is reached.
\item collapse the points, so that each bucket maps onto a single node and all edges between points map onto edges of the corresponding nodes.
\item check if the obtained graph is simple (e.g. it has no loops or multiple edges).
\end{enumerate}

\subsection{Replicator equation on multi-regular graphs}

On each of the regular communities taken in isolation, under the assumption that local dynamics are only affected by the strategies of players' immediate neighbours, the replicator dynamics is well approximated by equation (\ref{repeq1}). 
In order to compute the global dynamics it is necessary to take account of the distribution of each degree-homogeneous community, by weighting each community-specific replicator equation for the frequency of communities with that degree. Following \cite{repgraph}, the global dynamics is then:

\begin{equation}\label{myexpfreq} \begin{split}
\dot{x}_s= &  \frac{\mathbb{E}[\Delta x_s]}{\Delta t} \\
&= \sum_{k_i \ge 3} \sum_{d_1+\cdots +d_n=k_i} \Bigl[ x_s \Bigl( \frac{k_i!}{d_1! \cdots d_n!}q_{1|i}^{d_1}\cdots q_{n|i}^{d_n}\Pi_{(s; d_1, ... , d_n)} \Bigr) \Bigr] \Bigl[ 1-\frac{d_s}{k_i} \Bigr] \mathbb{P}[C_{k_i}] \Big/ \bar{\Pi} \\
&- \sum_{k_i \ge 3} \sum_{\substack{d_1+\cdots +d_n=k_i \\ j\ne i}} \Bigl[ x_j \Bigl( \frac{k_i!}{d_1! \cdots d_n!}q_{1|j}^{d_1}\cdots q_{n|j}^{d_n}\Pi_{(j; d_1, ... ,d_n)} \Bigr) \Bigr] \frac{d_s}{k_i} \mathbb{P}[C_{k_i}]  \Big/ \bar{\Pi} \\
& \approx w \Bigl( \sum_{k_i \ge 3} \frac{(k_i-2)^2}{k_i-1} \mathbb{P}[C_{k_i}] \Bigr) x_s(f_s+\sum_{k_i \ge 3} \sum_j x_j {b_{ij}}({k_i}) \mathbb{P}[C_{k_i}] - \phi)
\end{split}
\end{equation}

Given the graph, hence its degree distribution, the factor $ w \Bigl( \sum_{k_i \ge 3} \frac{(k_i-2)^2}{k_i-1} \mathbb{P}[C_{k_i}] \Bigr)$ is a constant, and again just represents a change of time scale, so we can rewrite ($\ref{myexpfreq}$) as:

Provided that the fraction of connections between different communities is low, the global dynamics on a graph with regular communities is given by:

\begin{equation}\label{myrep} 
 \dot{x}_s=x_s(f_s+\sum_{k_i \ge 3} \sum_j x_j {b_{ij}}({k_i}) \mathbb{P}[C_{k_i}] - \phi)
\end{equation}

where $k_i$ is the degree of nodes inside community $i$ and $\mathbb{P}[C_{k_i}] $ is the probability that a node is in a community with degree $k_i$, or the fraction of nodes in a community with degree $k_i$, so that the global dynamic is a weighted average of the local dynamics on each community \cite{myself}.

\section{Prisoner's dilemma}

Prisoner's Dilemma is one of the benchmark games for the study of cooperation \cite{Doebli04, liebermanhauertnowak, simplerule,nowak5rules, Axelrod_Hamilton}. It is a symmetric game in two strategies, \emph{Cooperate} and \emph{Defect} as can be seen in Table \ref{pd}, with one strictly dominant strategy, \emph{Defect}, which is the only strict Nash Equilibrium and so the only evolutionary stable strategy in the mean-field dynamics. 

\begin{table}[h!]
\caption{Prisoner's dilemma}
\centering
      \begin{tabular}{cccc}
        \hline
           & C  &D   \\ \hline
        C & $b - c $ & $-c$ \\
        D & $b$  & 0  \\ \hline
      \end{tabular}
      \label{pd}
\end{table}

It has already been shown that if the structure of the population is taken in consideration then there can be instances when cooperation prevails, for example \cite{repgraph} show that, in regular graphs with death-birth updating, if  $b/c>d$, where $d$ is the degree of the graph, cooperation prevails over defection, and similarly for Imitation updating this happens if $b/c>d+2$. Under birth-death updating they find that defection always prevails.
\noindent 
Let's now examine the replicator equation on a MRG for the above PD: call $x_c$ the frequency of cooperators, $(1-x_c)$ the frequency of defectors. The replicator equation with BD updating is:

\begin{equation}\label{RBD}
\dot{x}_c=x_c (1-x_c) \Bigl( \sum_{k_i \ge 3}\frac{-c}{k_i-2}\mathbb{P}[C_{k_i}] -c\Bigr )
\end{equation}

\noindent 
so even in the MRG case with BD there is no difference between a well-mixed and a structured population, as here the only stable fixed point is $x_c^*=0$.
\noindent 
Things gets more interesting in the case of  DB updating. The replicator equation on MRG is:

\begin{equation}\label{RDB}
\dot{x}_c=x_c (1-x_c)\Bigl( \sum_{k_i \ge 3}\frac{k_i(b-c)-2c}{(k_i+1)(k_i-2)}\mathbb{P}[C_{k_i}] -c \Bigr )
\end{equation}

\noindent 
 The equation above can be rewritten as:

\begin{equation}
\label{factorRE}
\dot{x}_c=x_c (1-x_c)\Bigl(  \frac{bf(k_i,\mathbb{P}[C_{k_i}],b,c) - cg(k_i,\mathbb{P}[C_{k_i}],b,c)}{\prod_{k_i \ge 3} (k_i+1)(k_i-2)}\Bigr) 
\end{equation}

\noindent 
where $f$ and $g$ are both multivariate polynomials in $k_i$. Cooperation will be sustainable if the inequality $b/c > g/f$ holds, unfortunately finding the roots of the polynomials can be hard, even in the simplest case where there are only two degree homogeneous subgraphs. We find an easy to interpret upper bound of $g/f$.

\begin{proposition}
Cooperation is sustainable in a Prisoner Dilemma on a multi-regular graph with death-birth updating if the relative benefit of cooperation is greater than the average degree:

\begin{equation}
\frac{b}{c}>\sum_{k_i}k_i\mathbb{P}[C_{k_i}]
\end{equation} 
\end{proposition}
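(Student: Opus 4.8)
The plan is to prove the claimed bound by showing that the average degree $\bar k:=\sum_i k_i\,\mathbb{P}[C_{k_i}]$ dominates the ratio $g/f$ that controls (\ref{factorRE}); since cooperation is sustainable exactly when $b/c>g/f$, the chain $b/c>\bar k\ge g/f$ will then suffice. First I would observe that the bracket multiplying $x_c(1-x_c)$ in (\ref{RDB}) carries no $x_c$-dependence, so it is a constant $K$: the endpoints $x_c\in\{0,1\}$ are the only fixed points, the interior of $[0,1]$ is invariant, and $x_c\to 1$ precisely when $K>0$, which is the sense I attach to ``cooperation is sustainable''. Collecting the coefficients of $b$ and $c$ in (\ref{RDB}) writes $K=bf-cg$ with $f=\sum_i\frac{k_i}{(k_i+1)(k_i-2)}\mathbb{P}[C_{k_i}]>0$, so that $K>0$ is equivalent to $b/c>g/f$ and the whole problem reduces to bounding $g/f$.

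The key simplification exploits the elementary identity $k^2=(k+1)(k-2)+(k+2)$. Writing $p_i:=\mathbb{P}[C_{k_i}]$ (so $\sum_i p_i=1$) and applying this identity to the $c$-coefficient absorbs the standalone constant term into a single clean sum, giving
\[
f=\sum_{i}\frac{k_i}{(k_i+1)(k_i-2)}\,p_i,\qquad
g=\sum_{i}\frac{k_i^2}{(k_i+1)(k_i-2)}\,p_i.
\]
Introducing the nonnegative weights $w_i=\dfrac{k_i}{(k_i+1)(k_i-2)}\,p_i$ one recognises $f=\sum_i w_i$ and $g=\sum_i k_i w_i$, so $g/f=\bigl(\sum_i k_i w_i\bigr)\big/\bigl(\sum_i w_i\bigr)$ is just a weighted average of the degrees. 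In the single-community case this already collapses to $g/f=k$ exactly, matching $\bar k$ and showing the bound is tight.

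It remains to compare two weighted averages of $\{k_i\}$ — one with weights $w_i$, the other with weights $p_i$. Since $w_i=h(k_i)\,p_i$ with $h(k)=k/\bigl((k+1)(k-2)\bigr)$, the target inequality $g/f\le\bar k$ becomes, after clearing the positive denominator $\sum_i w_i$, precisely $\sum_i(k_i-\bar k)\,h(k_i)\,p_i\le 0$, i.e.\ the covariance of $k$ and $h(k)$ under the degree distribution is nonpositive. I would establish this via the correlation identity
\[
\sum_i k_i h(k_i)\,p_i-\Bigl(\sum_i k_i p_i\Bigr)\Bigl(\sum_i h(k_i)p_i\Bigr)=\tfrac12\sum_{i,j}p_i p_j\,(k_i-k_j)\bigl(h(k_i)-h(k_j)\bigr),
\]
whose right-hand side is nonpositive because $h$ is monotone decreasing, so that $k_i>k_j$ forces $h(k_i)<h(k_j)$ and every summand is $\le 0$.

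The main obstacle is the monotonicity of $h$ on the admissible range $k\ge 3$, on which the entire argument rests; I would settle it by computing $h'(k)=-(k^2+2)/\bigl((k+1)(k-2)\bigr)^2<0$, which confirms $h$ is strictly decreasing and that the denominators remain positive for $k\ge 3$. Everything else is bookkeeping: the algebraic identity, the absorption of the constant term using $\sum_i p_i=1$, and the elementary correlation inequality. A final point I would state carefully is that the result is only an \emph{upper} bound: $b/c>\bar k$ is sufficient but not necessary, since $g/f$ is strictly smaller than $\bar k$ as soon as the graph contains communities of differing degrees, the gap being exactly the (negative) degree--$h$ covariance.
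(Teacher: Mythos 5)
Your proof is correct, and it takes a genuinely different route from the paper's. Both arguments begin with the same reduction---the bracket in (\ref{RDB}) is a constant $K=bf-cg$, so cooperation is sustainable exactly when $b/c>g/f$---but they diverge at the bounding step. The paper clears the common denominator $\prod_i(k_i+1)(k_i-2)$, splits the resulting numerator into the blocks $A$, $B$, $D$, reduces the claim to $A\bigl(\sum_ik_i\mathbb{P}[C_{k_i}]-1\bigr)-B>D$, and then estimates this through a chain of coarse lower bounds involving $k_i^{\min}$ and $\Pi^{\min}$; that chain culminates in the inequality $k_i^{\min}k_i>k_i^2$, which cannot hold when $k_i>k_i^{\min}$, so the paper's own closing step does not go through as written. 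Your argument instead keeps the denominators, uses the identity $k^2=(k+1)(k-2)+(k+2)$ together with $\sum_i\mathbb{P}[C_{k_i}]=1$ to absorb the standalone $-c$ term, and recognises $g/f$ as a weighted average of the degrees with weights proportional to $h(k_i)\mathbb{P}[C_{k_i}]$, where $h(k)=k/\bigl((k+1)(k-2)\bigr)$; the comparison with the unweighted average $\sum_ik_i\mathbb{P}[C_{k_i}]$ then follows from the Chebyshev correlation identity and the strict monotonicity $h'<0$ on $k\ge3$. What your route buys is substantial: every step is an identity or a termwise sign check, you obtain the gap $\sum_ik_i\mathbb{P}[C_{k_i}]-g/f$ exactly as (minus) a covariance, and you see at once that the bound is tight precisely for regular graphs, recovering the $b/c>k$ rule of \cite{repgraph} as the degenerate case. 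The only cosmetic discrepancy is that your $f$ and $g$ differ from the paper's by the common positive factor $\prod_i(k_i+1)(k_i-2)$, which does not affect the ratio; with that noted, your proof is a sound and cleaner replacement for the one in the paper.
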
 

\begin{proof}
We know from (\ref{factorRE}) that cooperation is sustainable if $b/c>g/f$, as $\frac{d\dot{x}_c}{dx}\big|_{x=1}<0$. We now prove that $\sum_{k_i}k_i\mathbb{P}[C_{k_i}]$ is an upper bound for $g/f$.
Consider that we can write $g/f$ as:

\begin{equation}\label{gef}
\frac{g}{f}=
\frac{ \prod_{k_i} (k_i +1)(k_i -2) + \sum_{k_i}(k_i+2)\mathbb{P}[C_{k_i}]\prod_{k_j\ne k_i} (k_j +1)(k_j -2)}{\sum_{k_i} k_i\mathbb{P}[C_{k_i}] \prod_{k_j\ne k_i} (k_j +1)(k_j -2) }
\end{equation} 

\noindent 
Now write $A=\sum_{k_i}k_i\mathbb{P}[C_{k_i}] \prod_{k_j\ne k_i} (k_j +1)(k_j -2) $ and $B=2\sum_{k_i}\mathbb{P}[C_{k_i}] \prod_{k_j\ne k_i} (k_j +1)(k_j -2)$ (both of degree $2(n-1)$), and $D=\prod_{k_i} (k_i +1)(k_i -2) $ (of degree $2n$). 

\noindent 
The upper bound condition is then:

\begin{equation} 
\frac{A+B+D}{A} < \sum_{k_i}k_i\mathbb{P}[C_{k_i}]
\end{equation} 

or equivalently:

\begin{equation} \label{ineq}
A \Bigl( \sum_{k_i}k_i\mathbb{P}[C_{k_i}] -1 \Bigr)-B > D
\end{equation} 

\noindent 
as $ \sum_{k_i}k_i[C_{k_i}]\ge k_i^{min}$, where $k_i^{min}$ is the lowest degree in the sequence, with equality only in the degenerate case of a regular graph, then  $A \Bigl( \sum_{k_i}k_i\mathbb{P}[C_{k_i}] -1 \Bigr)\ge (k_i^{min}-1)A$. Hence 

\begin{equation} 
A \Bigl( \sum_{k_i}k_i\mathbb{P}[C_{k_i}] -1 \Bigr)-B \ge A(k_i^{min}-1)-B= \sum_{k_i}[(k_i^{min}-1)k_i - 2]\mathbb{P}[C_{k_i}] \prod_{k_j\ne k_i} (k_j +1)(k_j -2) 
\end{equation}  
Also $ \sum_{k_i}\mathbb{P}[C_{k_i}]\prod_{k_j\ne k_i} (k_j +1)(k_j -2) \ge \Pi^{min}$ where $\Pi^{min} = \min\limits_{k_i}{\prod_{k_j\ne k_i}} (k_j +1)(k_j -2)$ so clearly:

\begin{equation} 
\sum_{k_i}[(k_i^{min}-1)k_i - 2]\mathbb{P}[C_{k_i}] \prod_{k_j\ne k_i} (k_j +1)(k_j -2) \ge [(k_i^{min}-1)k_i + 2]\Pi^{min}
\end{equation}  

\noindent
Now rewrite $D$ as $D=\Pi^{min}(k_i+1)(k_i-2)$, so (\ref{ineq}) holds if

\begin{equation} 
[(k_i^{min}-1)k_i - 2]\Pi^{min} > \Pi^{min}(k_i+1)(k_i-2)
\end{equation}  

\noindent 
given $k_i^{min}k_i > k_i^2$ as long as the graph is MR:

\begin{equation} \label{fin}
k_i^{min}k_i -k_i-2 > k_i^2-k_i-2
\end{equation} 

\noindent 
holds for every $k_i$ and this ends the proof.

\end{proof}

\noindent 
The RE in the case of IM updating is:

\begin{equation}\label{RIM}
\dot{x}_c=x_c (1-x_c)\Bigl( \sum_{k_i \ge 3}\frac{b k_i - c (k_i+ 6)}{(k_i+3)(k_i-2)}\mathbb{P}[C_{k_i}]\Bigr )
\end{equation}

\noindent 
We find an equivalent condition for cooperation with IM updating.
\begin{proposition}
Cooperation is sustainable in a Prisoner Dilemma on a multi-regular graph with imitation updating if the relative benefit of cooperation respects:
\begin{equation} 
\frac{b}{c}>\sum_{k_i}(k_i+2)\mathbb{P}[C_{k_i}]
\end{equation}  
\end{proposition}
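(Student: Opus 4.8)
The plan is to mirror the structure of the proof of the previous proposition: reduce the question of sustainability to the sign of the (constant) bracketed factor of the replicator equation~(\ref{RIM}), and then bound the resulting threshold by $\sum_{k_i}(k_i+2)\mathbb{P}[C_{k_i}]$. Writing $S$ for that bracket, note that $x_c(1-x_c)>0$ on the interior while $S$ is independent of $x_c$, so the fixed point $x_c=1$ is stable exactly when $S>0$ (indeed $\frac{d\dot x_c}{dx}\big|_{x=1}=-S$, paralleling the criterion used for death-birth). Everything therefore reduces to deciding the sign of $S$.

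First I would fold the stand-alone $-c$ into the sum using $\sum_{k_i}\mathbb{P}[C_{k_i}]=1$ and simplify each summand. The key algebraic identity is
\[
\frac{k_i b-(k_i+6)c}{(k_i+3)(k_i-2)}-c=\frac{k_i\bigl[b-c(k_i+2)\bigr]}{(k_i+3)(k_i-2)},
\]
which follows from $(k_i+6)+(k_i+3)(k_i-2)=k_i(k_i+2)$. Hence
\[
S=\sum_{k_i\ge 3}\frac{k_i}{(k_i+3)(k_i-2)}\bigl[b-c(k_i+2)\bigr]\mathbb{P}[C_{k_i}].
\]
Setting $w_i=\frac{k_i}{(k_i+3)(k_i-2)}\mathbb{P}[C_{k_i}]$, which is strictly positive since $k_i\ge 3$, cooperation is sustainable precisely when $b/c>\bigl(\sum_i w_i(k_i+2)\bigr)\big/\bigl(\sum_i w_i\bigr)$; that is, the exact threshold is a $w$-weighted average of the quantities $(k_i+2)$.

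It then remains to show this weighted average never exceeds the ordinary $\mathbb{P}$-average $\sum_{k_i}(k_i+2)\mathbb{P}[C_{k_i}]$. Reading $\mathbb{P}[C_{k_i}]$ as a distribution over degrees, the per-unit weight $w_i/\mathbb{P}[C_{k_i}]=\frac{k_i}{(k_i+3)(k_i-2)}$ is decreasing in $k_i$ (its derivative has numerator $-(k_i^2+6)<0$), whereas $k_i+2$ is increasing. By Chebyshev's sum inequality -- equivalently, the covariance of a decreasing and an increasing function of the same random variable is nonpositive -- the $w$-weighted mean of $(k_i+2)$ is at most its $\mathbb{P}$-weighted mean, with equality only in the regular case where all communities share a common degree. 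Therefore $b/c>\sum_{k_i}(k_i+2)\mathbb{P}[C_{k_i}]$ forces $S>0$, which is the claim.

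The hard part will be Step~1 rather than Step~3: the bracket in~(\ref{RIM}) as printed appears to omit the $-c$ term, and without it one cannot recover the known regular-graph threshold $b/c>k+2$, so I would first confirm that the correct bracket is $-c+\sum_{k_i}\frac{k_i b-(k_i+6)c}{(k_i+3)(k_i-2)}\mathbb{P}[C_{k_i}]$, consistent with the death-birth equation~(\ref{RDB}). Once that is settled, the monotonicity bound in Step~3 is routine; alternatively one can follow the purely polynomial route of the previous proof, clearing denominators to write $S=(b\tilde f-c\tilde g)\big/\prod_{k_i}(k_i+3)(k_i-2)$ and bounding $\tilde g/\tilde f$ term by term, but the covariance argument makes the role of degree heterogeneity transparent and handles the equality case cleanly.
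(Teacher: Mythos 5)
Your proposal is correct, but it proves the bound by a genuinely different route than the paper. The paper's proof is a one-line "as above": it clears the common denominator $\prod_{k_i}(k_i+3)(k_i-2)$, writes the exact threshold as $(A+B+D)/A$ with $A=\sum_{k_i}k_i\mathbb{P}[C_{k_i}]\prod_{k_j\ne k_i}(k_j+3)(k_j-2)$, $B=6\sum_{k_i}\mathbb{P}[C_{k_i}]\prod_{k_j\ne k_i}(k_j+3)(k_j-2)$, $D=\prod_{k_i}(k_i+3)(k_i-2)$, and repeats the chain of estimates from the death--birth proposition (reducing to the analogue of~(\ref{fin}) via $k_i^{\min}$ and $\Pi^{\min}$). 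You instead simplify term by term using $(k_i+6)+(k_i+3)(k_i-2)=k_i(k_i+2)$, exhibit the exact threshold as the $w$-weighted mean of $k_i+2$ with $w_i=\frac{k_i}{(k_i+3)(k_i-2)}\mathbb{P}[C_{k_i}]$, and conclude by Chebyshev's sum inequality since $\frac{k_i}{(k_i+3)(k_i-2)}$ is decreasing while $k_i+2$ is increasing. Your route buys several things: it identifies the \emph{exact} threshold in closed form, pins down the equality case (the regular graph), and sidesteps the looser steps in the paper's chain of inequalities (which carries a free index $k_i$ and a sign discrepancy between $-2$ and $+2$ across consecutive displays); the paper's route buys only uniformity with the preceding proof. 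You are also right to flag that the bracket in~(\ref{RIM}) as printed omits the $-c$ term: without it the regular-graph limit would give $b/c>(k+6)/k$ rather than the known $b/c>k+2$, and the paper's own proof implicitly restores it through the term $D$. Your reading is the consistent one.
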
 

\begin{proof}
As above, with $A=\sum_{k_i}k_i\mathbb{P}[C_{k_i}] \prod_{k_j\ne k_i} (k_j +3)(k_j -2) $ and $B=6\sum_{k_i}\mathbb{P}[C_{k_i}] \prod_{k_j\ne k_i} (k_j +3)(k_j -2)$ and $D=\prod_{k_i} (k_i +3)(k_i -2)$ so that (\ref{fin}) becomes:

\begin{equation} \label{fin}
k_i^{min}k_i -k_i-6 > k_i^2-k_i-6
\end{equation} 

\noindent  
always true when graph is MR.
\end{proof}

\noindent 
So the higher average connectivity the higher the relative benefit necessary to sustain cooperation, which means that on a MRG cooperation is sustainable as long as highly connected subgraphs are a low fraction of all the subgraphs. Hence we expect that the family of MRGs  where $\mathbb{P}[\mathcal{G}_{d}]$ is a power-law distribution should favor cooperation under a relatively low benefit-cost ratio.  

Assuming continuity of the degree, if the degree distribution has law $p(d)=(\gamma - 1){d}_{min}^{\gamma-1}d^{-\gamma}$, then cooperation prevails for 

\begin{equation} 
\frac{b}{c} > (\gamma-1){k}_{min}^{\alpha-1}\frac{{k}_{max}^{2-\gamma}- {k}_{min}^{2-\gamma}}{2-\gamma}
\end{equation} 

\noindent 
where ${k}_{min}$ and ${k}_{max}$ are the minimum and maximum degree respectively.

As in \cite{repgraph}  we also find equilibria in which both cooperators and defectors coexist. Consider for example the  Prisoner's Dilemma in the general form:

%
%
  
  \begin{table}[h!]
\centering
      \begin{tabular}{cccc}
        \hline
           & C  &D   \\ \hline
        C & $R $ & $S$ \\
        D & $T$  &P  \\ \hline
      \end{tabular}
      \label{PDgen}
\end{table}

\noindent 
where $T>R>P>S$. The replicator equation under BD is:

\begin{equation} \label{generic}
\dot{x}_c=x_c(1-x_c)\biggl(\frac{x_c \phi - \psi}{\prod_{k_i}(k_i+1)(k_i-2)}\biggr)
\end{equation}

where

\begin{equation} 
\begin{split}
\phi = & (T+S-P-R)\prod_{k_i} (k_i+1)(k_i-2) \\
\psi= &  S\prod_{k_i} (k_i+1)(k_i-2) + \sum_{k_i} (S+R(1+k_i)-T) \mathbb{P}[C_{k_i}] \prod_{k_j \ne k_i} (k_j+1)(k_j-2) \\
-&P\prod_{k_i} (k_i+1)\biggl[\prod_{k_i}{(k_i-2)}+\sum_{k_i}\mathbb{P}[C_{k_i}] \prod_{k_j,k_k  \ne k_i}(k_j-2)(k_k-2)\biggr]
\end{split} 
\end{equation}

\begin{lemma}
There exists multi-regular graphs for which the  Prisoner's Dilemma in the general form has a mixed equilibrium.
\end{lemma}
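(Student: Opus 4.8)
The plan is to reduce the statement to the existence of an interior root of the numerator in (\ref{generic}), and then to produce one explicit multi-regular graph together with admissible payoffs for which that root lies in $(0,1)$. First I would note that the fixed points of (\ref{generic}) are $x_c=0$, $x_c=1$, and the solutions of $x_c\phi-\psi=0$. Since this last factor is affine in $x_c$, there is at most one interior fixed point, namely $x_c^*=\psi/\phi$, and a mixed equilibrium exists precisely when $x_c^*\in(0,1)$. Equivalently, because $x_c(1-x_c)/\prod_{k_i}(k_i+1)(k_i-2)>0$ on $(0,1)$, it suffices that the numerator $x_c\phi-\psi$ change sign on the unit interval, i.e. that its endpoint values $-\psi$ (at $x_c=0$) and $\phi-\psi$ (at $x_c=1$) have opposite signs.

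Next I would choose the smallest tractable degree distribution so that the polynomials $\phi$ and $\psi$ collapse to concrete numbers. Taking two communities, say of degrees $k_1=3$ and $k_2=4$ with fractions $\mathbb{P}[C_3]$ and $\mathbb{P}[C_4]$, the products $\prod_{k_i}(k_i+1)(k_i-2)$, the single-omitted-factor products, and the double-omission product in $\psi$ all reduce to small integers, so $\phi$ and $\psi$ become explicit affine functions of $(R,S,T,P)$ and of the two fractions. I would then pick payoffs respecting the ordering $T>R>P>S$ so that the endpoint values have opposite signs. A convenient target is $\phi<\psi<0$ (equivalently $T+S-P-R<0$), since then $x_c^*=\psi/\phi\in(0,1)$, and moreover the derivative $\frac{d\dot{x}_c}{dx_c}\big|_{x_c^*}=x_c^*(1-x_c^*)\,\phi\big/\prod_{k_i}(k_i+1)(k_i-2)$ is negative, so the interior point is in fact an attracting coexistence equilibrium and not merely a fixed point.

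As a more robust alternative that avoids committing to a single numerical quadruple, I would argue by continuity from the regular case. For a one-community (regular) graph of degree $k$ the numerator of (\ref{generic}) is an explicit affine function of $x_c$, and \cite{repgraph} already exhibit general Prisoner's Dilemmas with interior coexistence equilibria in that setting; fixing such payoffs so that $x_c^*\in(0,1)$ and $\phi\ne 0$, the quantities $\phi$ and $\psi$ depend continuously on the community fractions $\mathbb{P}[C_{k_i}]$. Introducing a second community of a nearby degree with a sufficiently small fraction perturbs $\psi/\phi$ only slightly, so the interior root persists in $(0,1)$. The perturbed graph is genuinely multi-regular, since the connectivity and edge-parity requirements in the definition of a multi-regular graph can always be satisfied by taking the communities large enough, which yields the claim.

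The main obstacle I anticipate is bookkeeping rather than ideas: the products $\prod_{k_j\ne k_i}(k_j+1)(k_j-2)$ and $\prod_{k_j,k_k\ne k_i}(k_j-2)(k_k-2)$ appearing in $\psi$ make the sign of $\phi-\psi$ sensitive to how mass is spread across degrees, so the delicate point is to verify that the required sign pattern on the pair $\{-\psi,\ \phi-\psi\}$ is genuinely compatible with the strict ordering $T>R>P>S$. Pinning down one admissible payoff quadruple in the explicit route, or controlling the size of the second-community fraction in the perturbation route, is where care is needed; the remaining manipulations are routine.
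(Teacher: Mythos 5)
Your proposal is correct and takes essentially the same route as the paper: the paper likewise reduces the lemma to the condition $0<\psi/\phi<1$ for the interior root of the affine factor in (\ref{generic}) and then establishes existence by exhibiting a concrete degree sequence $(3,4,9)$ with explicit payoffs (Game~1), exactly the explicit-example strategy you outline. The only differences are inessential --- the paper does not pursue your continuity/perturbation alternative or the stability refinement via the sign of $\phi$, and, like you, it stops short of characterising all admissible payoff quadruples, stating explicitly that it ``shows existence with some examples.''
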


\noindent 
From (\ref{generic}) it is clear that a mixed equilibrium exists if $0<\psi / \phi < 1$. At this point we are not able to rigorously determine the conditions in terms of degree distribution and payoffs under which there is a mixed equilibrium, but we show its existence with some examples.

Consider as simple example a MRG with degree sequence (3,4,9) the RE is:

\begin{equation} \label{genre}
\dot{x}_c=x_c(1-x_c)\biggl(\frac{x_c \phi - \psi}{140}\biggr)
\end{equation}

where $\phi=140(T +S - P - R)$ and $\psi =140 (S-P) + 2\mathbb{P}[\mathcal{G}_{d_9}] (10 R - 10 P -  T +  S) + 
 14\mathbb{P}[\mathcal{G}_{d_4}] (5 R - 5 P -  T + S) + 35\mathbb{P}[\mathcal{G}_{d_3}] (4 R - 4 P - T +  S)$. So we have three equilibria, $x_c^*=0$, $x_c^*=1$ and $x_c^*=\frac{\psi}{\phi}$. Hence we can have an equilibrium where cooperators and defector coexist.  We shall now analyze how this equilibrium varies according to the topology in different PDs in general form.

%
  
   \begin{table}[h!]
\centering
      \begin{tabular}{cccc}
        \hline
           & C  &D   \\ \hline
        C & $5 $ & $0$ \\
        D & $8$  &1 \\ \hline
      \end{tabular}\caption*{Game 1: stable mixed equilibrium}
      \label{PD1}
\end{table}

  \noindent
  Figure \ref{pd345} shows cooperation levels in equilibrium for a graph with three communities (degree 3,4 and 5 respectively) with a benefit-cost ratio of $10/3$: when average degree is less than $10/3$ cooperation prevails, and for values of the average connectivity around $10/3$ there are few mixed-equilibria.

 \begin{figure}[h!]
 \centering
  \includegraphics[width = 0.6\textwidth]{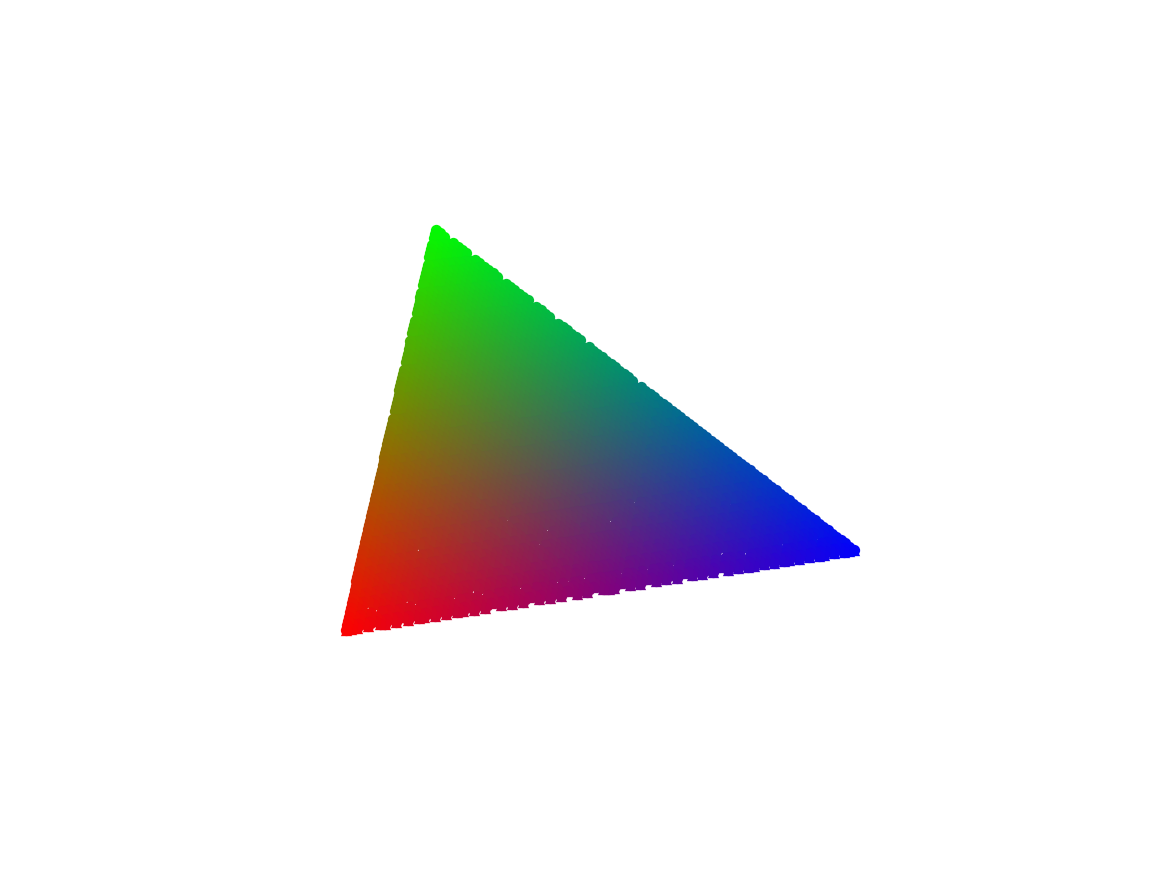}
  \caption{\textbf{Probability colour map}
           Each point in the simplex represent a probability triple given by barycentric coordinates, and each point is mapped to a colour. In a graph with three regular communities, each coordinate represent the probability for a node of being in the corresponding community, where red is $k = 3$, blue $k=4$ and green $k=5$ for the Prisoner's dilemma and Coordination games and $k=7$ in the Hawk-Dove game.}
           \label{simplex}
      \end{figure}

\begin{figure}[h!]
\centering
 \includegraphics[width = 0.6\textwidth]{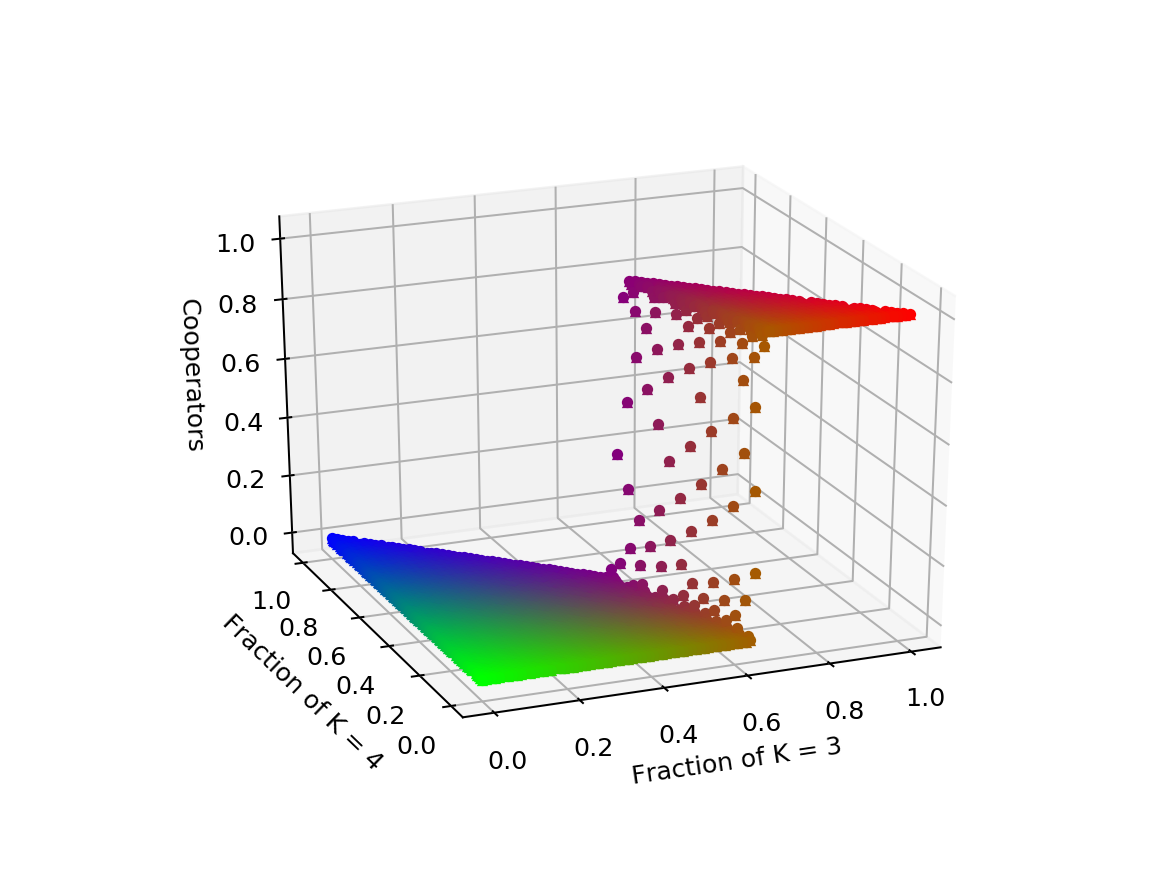}
  \caption{\textbf{Prisoner's dilemma, death-birth}
      Fraction of cooperators in equilibrium as the graph structure change. The graph has three communities, ${\color{red}k = 3}$,  ${\color{blue}k = 4}$,  ${\color{green}k = 5}$. The benefit-cost ratio is $b/c = 10/3$, so when average degree is more than $10/3$ defection prevails. The plot also shows few cases where cooperators and defectors coexist in equilibrium. }
      \label{pd345}
      \end{figure}

    \begin{figure}[h!]
\begin{subfigure}{.5\textwidth}
  \centering
  \includegraphics[width=1.1\textwidth]{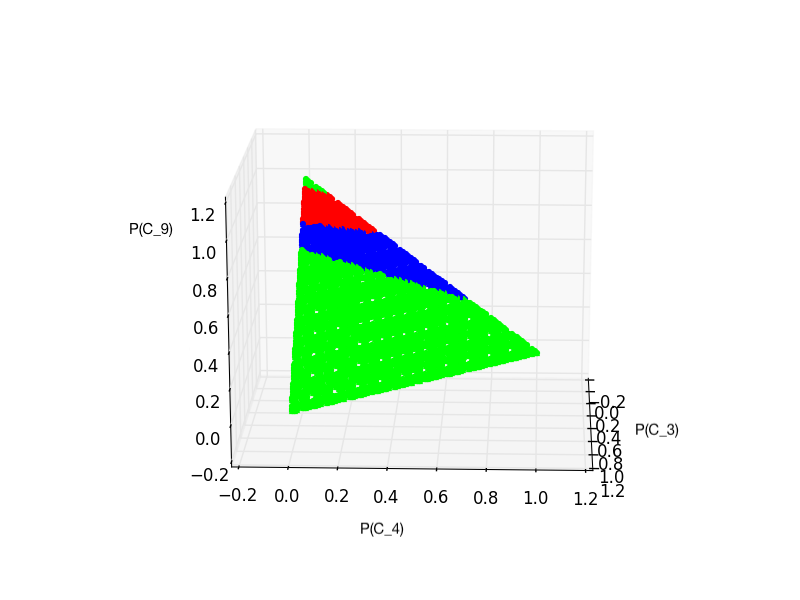}
  \caption{}
  \label{sfig7}
\end{subfigure}
\begin{subfigure}{.5\textwidth}
  \centering
  \includegraphics[width=0.9\textwidth]{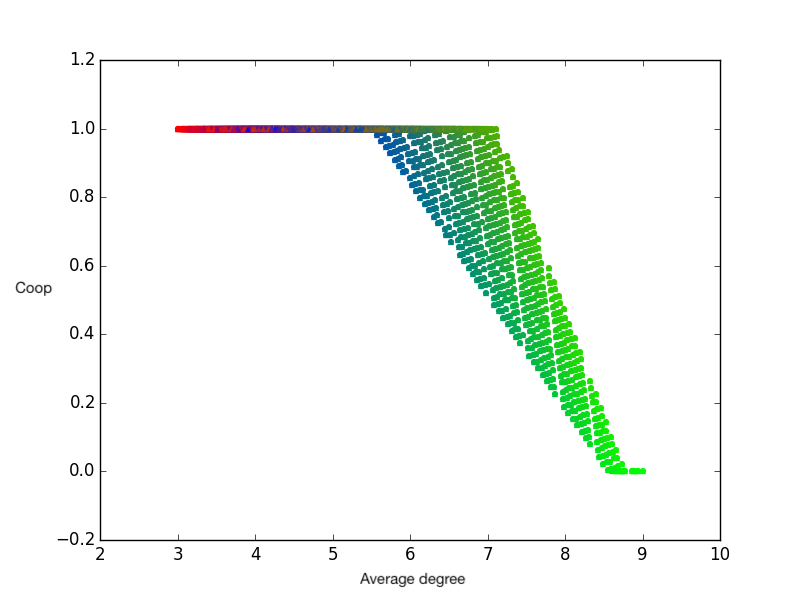}
  \caption{}
  \label{sfig8}
\end{subfigure}\caption{(a) Map between network topologies and stability for Game 3: green corresponds to no mixed equilibrium, blue to stable and red to unstable, (b) maps cooperation levels to average degree.}\label{stabben}
\end{figure}
      
\noindent
An interesting example is given by (Game 3). In this case depending on the graph topology the mixed equilibrium can be either stable or unstable. When the mixed equilibrium is unstable then $x_c^*=1$ is stable, while as the average degree increases towards its maximum the mixed equilibrium becomes stable, even if the cooperation level is decreasing with the connectivity. When the average degree approaches 9, so where the graph is almost a regular graph of degree 9, the only stable equilibrium is where defectors win, $x_c^*=0$, as can be seen in Figure \ref{stabben}.

%
%
  
 \begin{table}[h!]
\centering
      \begin{tabular}{cccc}
        \hline
           & C  &D   \\ \hline
        C & $8 $ & $0$ \\
        D & $10$  &1 \\ \hline
      \end{tabular}\caption*{Game 3: both stable and unstable mixed equilibrium}
      \label{game33}
\end{table}

\section{Hawk-Dove game}

The Hawk-Dove game (or snowdrift) has also extensively being used to study cooperation. The game describes a situation where two players engage to gain a prize $b$, and they can either choose to fight to take it all for themselves or to share it with the opponent. \emph{Hawks} are assumed to be confrontational, they always fight; the cost of losing a fight is $c$: if two \emph{hawks} face each other they will get an expected payoff of $(b-c)/2$. \emph{Doves} are peaceful, if facing an aggressive \emph{hawk} they will just leave, getting a payoff of $0$ and leaving all the prize to their opponent, while if they meet another \emph{dove} they will equally share the prize, getting $b/2$ each. The game payoffs structure is described by table \ref{hd} where is assumed that $c>b$.

\begin{table}[h!]
\centering
\caption{Hawk-Dove. Here $b < c$ }
      \begin{tabular}{cccc}
        \hline
           & H  &D   \\ \hline
        H & $(b - c)/2 $ & $b$ \\
        D & $ 0 $  & $b/2$  \\ \hline
      \end{tabular}
      \label{hd}
\end{table}

This game has a similar structure to the Prisoner's Dilemma, as both parties have incentive to defect and fight to obtain a higher payoff, but a reciprocal aggressive behaviour is detrimental (in expectation) for both. While the Prisoner's dilemma has a unique dominant strategy, which is mutual defection, Hawk-Dove has two Nash equilibria in pure strategies, namely (\emph{Hawk},\emph{Dove}) and (\emph{Dove},\emph{Hawk}), and one equilibrium in mixed strategies, (\emph{Hawk},\emph{Dove}) = ($b/c$, $1 - b/c$).
The mixed strategy corresponds to the Evolutionary Stable Strategy in a mean-field evolutionary game, where the equilibrium frequency of \emph{hawks} is equal to $b/c$. The equilibrium where everybody in the population is a \emph{dove} is unstable as long as $b>0$, so cooperation will never prevail in the mean-field case.

Let us first study the game on a regular graph of degree $k$  under the three different updating mechanisms. The stable equilibrium under death-birth is $x_d^* = (bk^2 - bk -ck^2 +c)/c(-k^2+k+2)$, where $x_d^*<1$ when $c/b < k(k-1)/(k+1)$. It is easy to check that the equilibrium level of cooperation on a regular graph is greater than the equilibrium in the mean-field case when $c/b > 2/(k+1) $, which means that a regular graph always favours cooperation over defection, and the same holds for graphs with regular communities.
Computing the equilibria for imitation updating, we can see that the stable equilibrium is $ [b(-k^2 - k) + c(k^2 + 2k - 3)]/[c(k^2 + k - 6)]$, which is a non-degenerate mixed equilibrium when $c/b<k(k+1)/(k+3)$ and it is greater than the mean-field when $c/b > 6/(k+3)$ which again always holds for $k \ge 3$ on both regular graphs, and graphs with degree regular communities.

The fixed point $x^*_d=1$ is locally stable when $\frac{d\dot{x}_d}{dx_d}\big|_{x_d=1} < 0$, so by studying the sign of $\frac{d\dot{x}_d}{dx_d}\big|_{x_d=1}$ it  is easy to determine the conditions under which \emph{doves} dominate over \emph{hawks}, who become extinct. With birth-death updating we have that cooperation is a stable point of the dynamics when $c/b > k$ in the case of regular graph, and on a graph with regular communities this is true when:

\begin{equation}
\frac{c}{b} > \frac{\sum_i  \mathbb{P}[C_{k_i}] k_i \prod_{j \ne i} (k_j - 2)}{\sum_i  \mathbb{P}[C_{k_i}] \prod_{j \ne i} (k_j - 2)}
\label{hdbd}
\end{equation}

The right-hand side of (\ref{hdbd}) is bounded above by $ \sum_i  \mathbb{P}[C_{k_i}] k_i $ if the numerator of their difference is non-negative, as the denominator $\sum_i  \mathbb{P}[C_{k_i}] \prod_{j \ne i} (k_j - 2)$ is always positive. This reads:

\begin{equation}
\Bigl[ \sum_i \mathbb{P}[C_{k_i}] k_i \Bigr]  \Bigl[ \sum_i  \mathbb{P}[C_{k_i}] \prod_{j \ne i}  (k_j - 2) \Bigr] - \sum_i  \mathbb{P}[C_{k_i}] k_i \prod_{j \ne i} (k_j - 2) \ge 0
\label{hdbd1}
\end{equation}

(\ref{hdbd1}) can be rewritten as :

\begin{equation}
\sum_{i,j}  {P}[C_{k_i}]{P}[C_{k_j}] \prod_{l \ne i,j} (k_l - 2) (k_i - k_j) ^2\ge 0
\label{hdbd2}
\end{equation}

which is always true as $k_i \ge 3$ for all $i$. So

\begin{equation}
\frac{c}{b} > \sum_i  \mathbb{P}[C_{k_i}] k_i 
\label{hdbd3}
\end{equation}

is a sufficient condition for \emph{doves} to prevail.

With death-birth updating \emph{doves} prevail when $c/b > k(k - 1)/(k + 1)$ for regular graphs, while for a graph with regular communities $\frac{d\dot{x}_d}{dx_d}\big|_{x_d=1} < 0$ when:

\begin{equation}
\frac{c}{b} > \frac{\sum_i  \mathbb{P}[C_{k_i}]  k_i (k_i -1) \prod_{j \ne i} (k_j - 2)(k_j + 1)}
{\sum_i \mathbb{P}[C_{k_i}](k_i + 1) \prod_{j \ne i} (k_j - 2)(k_j + 1)}
\label{hddb1}
\end{equation}

to prove that (\ref{hddb1}) is bounded above by  $\sum_i \frac{k_i(k_i - 1)}{k_i + 1} \mathbb{P}[C_{k_i}] $ is sufficient to prove that:

\begin{equation}
\begin{split}
& \sum_{i} \Bigr[ \mathbb{P}[C_{k_i}] (k_i-1) \prod_{j \ne i}(k_j+1) \Bigl] \Bigr[ \sum_i  \mathbb{P}[C_{k_i}] (k_i+1) \prod_{j \ne i} (k_j - 2)(k_j + 1) \Bigr] - \\ 
 & \sum_i \mathbb{P}[C_{k_i}] k_i (k_i -1 ) \prod_{j \ne i}(k_j - 2)(k_j + 1) \ge 0
 \end{split}
 \label{hddb2}
\end{equation}

(\ref{hddb2}) is the numerator of the difference between $\sum_i \frac{k_i(k_i - 1)}{k_i + 1} \mathbb{P}[C_{k_i}] $ and (\ref{hddb1}), and the denominator $\sum_i \mathbb{P}[C_{k_i}] (k_i+1)\prod_{j \ne i}(k_j+1)(k_j-2)$ is always positive.

\begin{equation}
 \sum_{i,j \in C(n,2)} \mathbb{P}[C_{k_i}] \mathbb{P}[C_{k_j}] (k_i-k_j)^2(k_i k_j + k_i + k_j -1) \prod_{l \ne i,j}(k_l-2)(k_l+1) \ge 0
 \label{hddb3}
\end{equation}

where $C(n,2)$ is the set of 2-combinations of the $n$ indices. (\ref{hddb3}) is never less than zero as $k_i \ge 3$ for all $i$, hence: 
\begin{equation}
\frac{c}{b} > \sum_i \frac{k_i(k_i - 1)}{k_i + 1} \mathbb{P}[C_{k_i}] 
\label{hdDB}
\end{equation}

Analogously for imitation updating cooperation prevails for $ c/b > k(k + 1)/(k + 3) $ on regular graphs. On graphs with degree regular communities  $\frac{d\dot{x}_d}{dx_d}\big|_{x_d=1} < 0$ when:

\begin{equation}
\frac{c}{b} >\frac{ \sum_i  \mathbb{P}[C_{k_i}] k_i (k_i + 1)\prod_{j \ne i} (k_j - 2) (k_j + 3)}
{\sum_i  \mathbb{P}[C_{k_i}] (k_i + 3) \prod_{j \ne i} (k_j - 2) (k_j + 3)}
\label{hdim1}
\end{equation}

again to prove that (\ref{hdim1})  is bounded above by $\sum_i \frac{k_i(k_i + 1)}{k_i + 3} \mathbb{P}[C_{k_i}]$ it suffices to show that the numerator of the difference between  $\sum_i \frac{k_i(k_i + 1)}{k_i + 3} \mathbb{P}[C_{k_i}]$ and (\ref{hdim1}) is non-negative, as the denominator $\sum_i \mathbb{P}[C_{k_i}] (k_i+3)\prod_{j \ne i}(k_j+4)(k_j-2)$ is always positive. The numerator of the difference is:

\begin{equation}
 \sum_{i,j \in C(n,2)} \mathbb{P}[C_{k_i}] \mathbb{P}[C_{k_j}] (k_i-k_j)^2(k_i k_j +3 k_i +3 k_j +3) \prod_{l \ne i,j}(k_l-2)(k_l+3) \ge 0
 \label{hddb3}
\end{equation}

where $C(n,2)$ is the set of 2-combinations of the $n$ indices as above. Clearly (\ref{hddb3}) is always non-negative as $k_i \ge 3$ for all $i$. Hence a sufficient condition for doves to prevail with imitation updating is:

\begin{equation}
\frac{c}{b} > \sum_i \frac{k_i(k_i + 1)}{k_i + 3} \mathbb{P}[C_{k_i}] 
\label{hdI}
\end{equation}

In conclusion reaching cooperation in a Hawk-Dove game on graphs with regular communities is easier than in a corresponding graph with disconnected regular components,  in the sense that cooperation is sustainable with a lower relative cost of the aggressive behaviour. Moreover numerical simulations show that, if we compare the distance between the bounds and the true thresholds, we can see that this distance is always greater for imitation, meaning that imitation promotes cooperation more than the other two mechanisms, as it is the case for Prisoner's dilemma as well.

   \begin{figure}[h!]
   \centering
 \includegraphics[width = 0.7\textwidth]{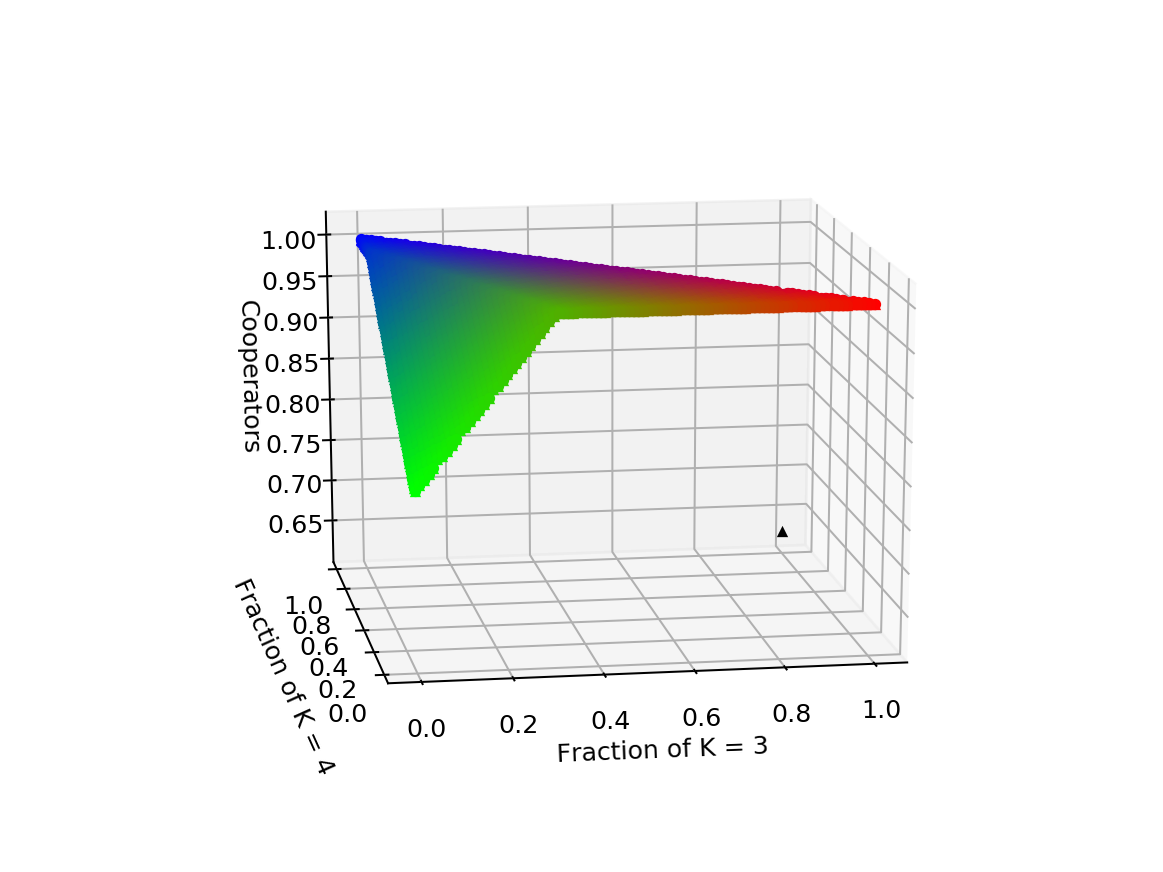}
  \caption{\textbf{Hawk-Dove, death-birth}
      Fraction of cooperators for the Hawk-Dove game as the graph structure change. The three communities here have degree ${\color{red}k = 3}$,  ${\color{blue}k = 4}$,  ${\color{green}k = 7}$, and $c/b = 3/8$. The black triangle is the level of cooperation in the mean-field case, at $x^* = 5/8$. When $\sum_i \frac{k_i(k_i-1)}{(k_i +1)} > 8/3$ cooperation prevails, while for all other cases \emph{hawks} and \emph{doves} coexist in equilibrium, with a minimum level of cooperation when the graph is 5-regular.}
      \label{hd347}
      \end{figure}

\section{Coordination game}
\begin{figure}[h!]
 \centering
 \includegraphics[width = 0.7\textwidth]{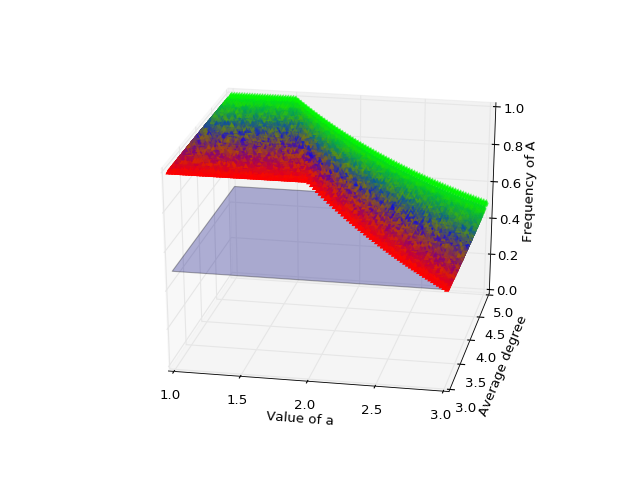}
  \caption{\textbf{Coordination game, birth-death.}
      The graph has three communities, respectively of degree ${\color{red}k = 3}$,  ${\color{blue}k = 4}$,  ${\color{green}k = 5}$, colours represent the position in the probability simplex above, hence the triple  $(P_3,P_4,P_5)$ reporting the probability a node is in each of the three communities. The coloured surface represents the separation between the basins of attraction, where the volume above the surface is the basin of \emph{A} and that below is the basin of \emph{B}. The light-blue plane is the set of points where the two basins are equal. For birth-death the basin of attraction of \emph{B} is always larger than that of \emph{A}, so risk-dominance is favourite.}
      \label{basBD}
      \end{figure}

       \begin{figure}[h!]
       \centering
 \includegraphics[width = 0.7\textwidth]{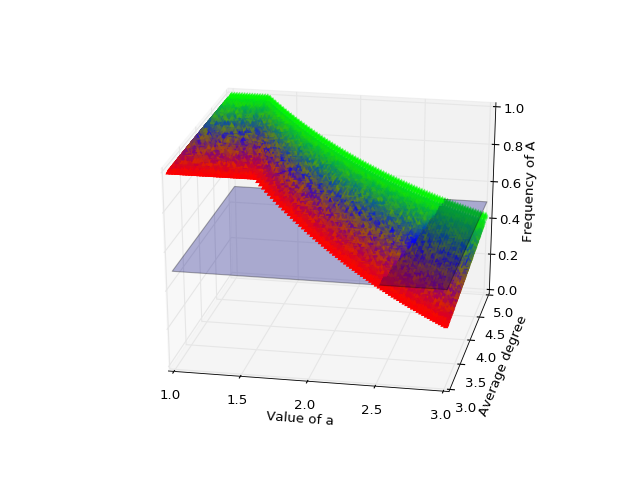}
  \caption{\textbf{Coordination game, death-birth.}
      The graph has three communities, respectively of degree ${\color{red}k = 3}$,  ${\color{blue}k = 4}$,  ${\color{green}k = 5}$, colours represent the position in the probability simplex above, hence the triple  $(P_3,P_4,P_5)$ reporting the probability a node is in each of the three communities. The coloured surface represents the separation between the basins of attraction, where the volume above the surface is the basin of \emph{A} and that below is the basin of \emph{B}. The light-blue plane is the set of points where the two basins are equal. For death-birth the basin of attraction of \emph{A} can be larger than that of \emph{B} for $a$ close to 3. Death-birth may promote Pareto-efficiency over risk-dominance.}
      \label{basDB}
      \end{figure}
      
      \begin{figure}[h!]
      \centering
 \includegraphics[width = 0.7\textwidth]{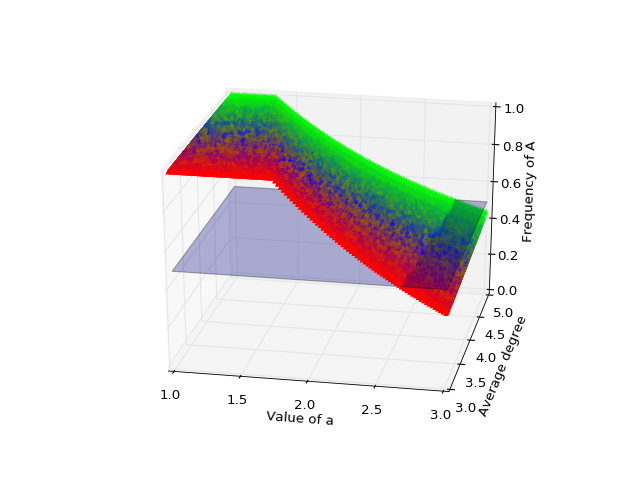}
  \caption{\textbf{Coordination game, imitation.}
      The graph has three communities, respectively of degree ${\color{red}k = 3}$,  ${\color{blue}k = 4}$,  ${\color{green}k = 5}$, colours represent the position in the probability simplex above, hence the triple  $(P_3,P_4,P_5)$ reporting the probability a node is in each of the three communities. The coloured surface represents the separation between the basins of attraction, where the volume above the surface is the basin of \emph{A} and that below is the basin of \emph{B}. The light-blue plane is the set of points where the two basins are equal. Similar to birth-death, also for imitation the basin of attraction of \emph{A} can be larger than that of \emph{B} for $a$ close to 3, so also imitation may promote Pareto-efficiency over risk-dominance, but less than birth-death, as can be seen comparing the volumes below the light-blue plane in the two cases.}
      \label{basIM}
      \end{figure}

A coordination game is a two-strategies game with the payoff structure given in table \ref{coord} where $a>c$ and  $d>b$.

\begin{table}[h!]
\caption{Coordination game. Here $a>c$ and $d>b$}
\centering
      \begin{tabular}{cccc}
        \hline
         & A  &B   \\ \hline
        A & $a $ & $b$ \\
        B & $ c$  & $ d$ \\ \hline
      \end{tabular}
      \label{coord}
\end{table}

The game describes a coordination problem between two individuals, who could coordinate on an action $A$ that is more beneficial for both if done together, but detrimental if done on one's own. This game has two Nash equilibria in pure strategies (both \emph{A} and \emph{B}), and when $a+b < c+d$ \emph{B} is \emph{risk dominant}, as it has the largest basin of attraction, while if $a>d$, \emph{A} is \emph{Pareto-efficient} as it yields a higher payoff for both. Consider the case where $b=0$, $c=1$, $d=2$ and $1<a<3$. In the mean-field case  there is an unstable equilibrium at $x_a^* = 2/(1+a)$, while both $A$ and $B$ are stable. Under birth-death updating on regular graphs the basin of attraction of strategy $B$ is always larger than in the mean-field case, and this naturally extends to graphs with regular communities, as can be seen in figure \ref{basBD}. Under death-birth updating \cite{repgraph} show that for a regular graph with degre $k$, if $a > (3k +1)/(k+1)$  then $A$ is both payoff and risk dominant, while the same holds for imitation updating if  $a > (3k +7)/(k+3)$.
I find an analogous condition for the coordination game on graphs with regular communities, namely

\begin{equation}
a >  \frac{2 \prod_i ( k_i + 1)(k_i-2) + \sum_i \mathbb{P}[C_{k_i}] (2k_i^2-1)\prod_{j\ne i} (k_j +1) (k_j-2)} {\sum_i \mathbb{P}[C_{k_i}]  (k_i + 1) \prod_{j\ne i} (k_j +1) (k_j-2) - 2 \prod_i ( k_i + 1)(ki-2)} 
\label{cDB}
\end{equation}

for death-birth updating. It can be shown numerically that (\ref{cDB}) is bounded above by  $\sum_{i} \frac{3 k_i + 1}{k_i + 1} \mathbb{P}[C_{k_i}] $, so a sufficient condition for $A$ to be both payoff and risk dominant is:

\begin{equation}
a > \sum_{i} \frac{3 k_i + 1}{k_i + 1} \mathbb{P}[C_{k_i}] 
\end{equation}

while for imitation updating this is true when: 

\begin{equation}
a >  \frac{ \sum_i \mathbb{P}[C_{k_i}] (4k_i^2+8k_i-6)  \prod_{j\ne i} (k_j +3) (k_j-2) - \prod_i (k_i+3)(k_i-2)} {\sum_i \mathbb{P}[C_{k_i}]  (2k_i + 6) \prod_{j\ne i} (k_j +3) (k_j-2) + \prod_i (k_i+3)(k_i-2)} 
\label{cIM}
\end{equation}

again it can be shown numerically that (\ref{cIM}) is bounded above by $\sum_{i} \frac{3 k_i + 7}{k_i + 3} \mathbb{P}[C_{k_i}]$, so a sufficient condition for $A$ to be both payoff and risk dominant with imitation updating is:

\begin{equation}
a > \sum_{i} \frac{3 k_i + 7}{k_i + 3} \mathbb{P}[C_{k_i}] 
\end{equation}

Figures \ref{basDB}, \ref{basIM} show the basin of attraction on a graph with three communities for death-birth updating and imitation updating respectively, as a function of $a$ and average degree. When $a$ is sufficiently large the strategy \emph{A} has the larger basin of attraction, so Pareto-efficiency is favoured over risk-dominance for birth-death and imitation.

\section{Discussion}

In this paper I presented an extension of my previous work \cite{myself}, providing a version of the replicator equation for a family of graphs characterised by degree-regular communities. As examples of possible application of this equation, here I study the evolutionary dynamics of three game classes: Prisoner's dilemma, Hawk-Dove and Coordination games. It is shown that graphs with degree-regular communities promote cooperation both in the Prisoner's dilemma and in the Hawk-Dove game for imitation and death-birth updating, and that imitation updating in both cases is more favourable to cooperation than death-birth. The results confirm that higher degree heterogeneity favours cooperation, and this can be better understood by comparing the dynamics on a multi-regular graph with the dynamics on a graph with disconnected regular components. In the case of the Prisoner's dilemma with birth-death updating, in all those components where the degree is such that $b/c > k_i$ cooperators will prevail, \emph{viceversa} in the other components defectors will prevail (and in some of them we could also have a mixed equilibrium). So the only way to have cooperation prevailing globally is $b/c > k_{\text{max}}+2$, where $ k_{\text{max}}$ is the largest degree of the graph. Adding a few connections between these regular components, as we do in a multi-regular graph, changes the picture completely, and cooperation prevails if $b/c$ is greater than the average degree, which is a much easier condition to meet. The same is true for imitation updating, where we would have that each disconnected component may reach a different equilibrium depending on their degree, with cooperation prevailing locally where $b/c>k_i+2$, and globally only if $b/c>k_{\text{max}}+2$, while on a multi-regular graph we have the milder condition $b/c>\sum_i (k_i+2)  \mathbb{P}[C_{k_i}] $.
Analogously, for the Hawk-Dove game on a graph with regular disconnected components, cooperation prevails globally if  $c/b> k_{\text{max}}$ for birth-death, $c/b> k_{\text{max}}(k_{\text{max}} - 1)/(k_{\text{max}} + 1)$ for death-birth and $c/b > k_{\text{max}}(k_{\text{max}} + 1)/(k_{\text{max}} + 3)$ for imitation, and each of these conditions is stronger than the corresponding condition on multi-regular graphs as in equations (\ref{hdbd3}), (\ref{hdDB}), (\ref{hdI}) respectively. If these conditions are not met, each disconnected component will be in a different equilibrium depending on its degree, with some components where \emph{doves} prevail, others where the two strategies coexist.

In the Coordination game on graphs with regular disconnected components, the Pareto-efficient strategy needs to yield a higher payoff than the one needed on a multi-regular graph in order to be both Pareto-efficient and risk-dominant globally, so we can say that graphs in this family promote Pareto-efficiency over risk-dominance. Moreover, on a graph with disconnected components we may have that the Pareto-efficient strategy is also risk-dominant on some components and only Pareto-efficient on others, depending on their degree.

In conclusion the results show that multi-regular graphs enhance cooperation and favour Pareto-efficiency compared to both the complete graph (well-mixed population) and the regular graph.

The replicator equation provided can be applied to any game on such graphs, so further research directions include the study of other game classes, in particular games with more than two strategies.

\newpage
\bibliographystyle{acm}
\bibliography{rep_multi}

\end{document}